\DeclareMathOperator*{\argmax}{arg\,max}
\newtheorem{theorem}{Theorem}
\newtheorem{lemma}{Lemma}
\newtheorem{remark}{Remark}
\title{\LARGE \bf
A Risk-Sensitive Finite-Time Reachability Approach for \\Safety of Stochastic Dynamic Systems}
\author{Margaret P. Chapman$^{1,2}$, Jonathan Lacotte$^{3}$, Aviv Tamar$^{1}$, Donggun Lee$^{4}$,
Kevin M. Smith$^{5}$,\\ Victoria Cheng$^{6}$, Jaime F. Fisac$^{1}$, Susmit Jha$^{2}$, Marco Pavone\textsuperscript{7}, Claire J. Tomlin$^{1}$
\thanks{$^{1}$M.C., J.F., A.T., and C.T. are with the Department of Electrical Engineering and Computer Sciences, University of California, Berkeley, USA.
        {\tt\small chapmanm@berkeley.edu}}%
\thanks{$^{2}$S.J. is with Computer Science Laboratory at SRI International, Menlo Park, California, USA. M.C. was a Student Associate at SRI International.
        }%
\thanks{$^{3}$J.L. is with the Department of Electrical Engineering, Stanford University, USA.
        }%
\thanks{$^{4}$D.L. is with the Department of Mechanical Engineering, University of California, Berkeley, USA.
        }%
\thanks{$^{5}$K.S. is with OptiRTC, Inc. and the Department of Civil and Environmental Engineering, Tufts University, USA.
        }%
\thanks{$^{6}$V.C. is with the Department of Civil and Environmental Engineering, University of California, Berkeley, USA.
        }%
        \thanks{$^{7}$M.P. is with the Department of Aeronautics and Astronautics, Stanford University, USA.
        }%
}
\newif\ifmargincomments 
\begin{document}

\maketitle

\thispagestyle{empty}
\pagestyle{empty}

\begin{abstract}
A classic reachability problem for safety of dynamic systems is to compute the set of initial states from which 
the state trajectory is guaranteed to stay inside a given constraint set over a given time horizon. 
In this paper, we leverage existing theory of reachability analysis and risk measures to devise a \textit{risk-sensitive} reachability approach for safety of {\em stochastic} dynamic systems under non-adversarial disturbances
over a finite time horizon. Specifically, we first introduce the notion of a \textit{risk-sensitive safe set} as a set of initial states from which 
the risk of large constraint violations can be reduced to a required level via a control policy, where risk is quantified 
using the \textit{Conditional Value-at-Risk} (CVaR) measure. Second, we show how the computation of a risk-sensitive safe set can be reduced to the 
solution to a Markov Decision Process (MDP), where cost is assessed according to CVaR. 
Third, leveraging this reduction, we devise a tractable algorithm to approximate a risk-sensitive safe set 
and provide arguments about its correctness. 
Finally, we present a realistic
example inspired from stormwater catchment design to demonstrate the utility of risk-sensitive reachability analysis.
In particular, our approach allows a practitioner to tune the level of risk sensitivity from worst-case 
(which is typical for Hamilton-Jacobi reachability analysis) to risk-neutral 
(which is the case for stochastic reachability analysis). 
\end{abstract}


\section{Introduction}
\label{sec:introduction}
Reachability analysis is a formal verification method based on optimal control theory that is used to prove 
safety or performance properties of dynamic systems~\cite{bansal2017hamilton}.
A classic reachability problem for safety is to compute the set of initial states from which 
the state trajectory is guaranteed to stay inside a given constraint set over a given time horizon.
This problem was first considered for discrete-time dynamic systems by Bertsekas and Rhodes 
under the assumption that disturbances are uncertain but belong to known sets~\cite{bertsekas1971control},~\cite{bertsekas1971minimax},~\cite{bertsekas2005dynamic}.
In this context, the problem is solved using a minimax formulation,
in which disturbances behave adversarially and safety is described as a binary notion based on set membership~\cite{bertsekas1971control},~\cite{bertsekas1971minimax},~\cite[Sec. 3.6.2]{bertsekas2005dynamic}.

In practice, minimax formulations can yield overly conservative solutions, particularly because disturbances are usually non-adversarial.
Most storms do not cause major floods, and most vehicles are not involved in pursuit-evasion games.
If there are enough observations of the system, one can estimate a probability distribution
for the disturbance (e.g., see~\cite{silverman2018density}), and then assess safety properties of the system in a more realistic context.
For stochastic discrete-time dynamic systems, 
Abate et al.~\cite{abate2008probabilistic} developed an algorithm to compute a set of initial states
from which the probability of safety of the state trajectory can be increased to a required level by a control policy.\footnotemark
\footnotetext{Safety of the state trajectory is the event that the state trajectory stays in the constraint set over a finite time horizon.} 
Summers and Lygeros~\cite{summers2010verification} extended the algorithm of Abate et al. to quantify the probability of safety and performance
of the state trajectory, by specifying that the state trajectory should also reach a target set.   

Both the stochastic reachability methods~\cite{abate2008probabilistic},~\cite{summers2010verification} and the minimax reachability methods~\cite{bertsekas1971control},~\cite{bertsekas1971minimax},~\cite{bertsekas2005dynamic} for discrete-time dynamic systems
describe safety as a binary notion based on set membership.
In Abate et al., for example, the probability of safety to be optimized is formulated as an expectation of a product (or maximum)
of indicator functions, where each indicator encodes the event that the state at a particular time point is inside a given set~\cite{abate2008probabilistic}.
The stochastic reachability methods~\cite{abate2008probabilistic},~\cite{summers2010verification} 
do not generalize to quantify the distance between the state trajectory and the boundary of the constraint set,
since they use indicator functions to convert probabilities to expectations to be optimized.

In contrast, Hamilton-Jacobi (HJ) reachability methods quantify the deterministic analogue of this distance
for continuous-time systems subject to adversarial disturbances 
(e.g., see~\cite{bansal2017hamilton},~\cite{herbert2017fastrack},~\cite{EECS-2018-41},~\cite{mitchell2005toolbox}).
Quantifying the distance between the state trajectory and the boundary of the constraint set in a non-binary fashion
may be important in applications where the boundary is not known exactly,
or where mild constraint violations are inevitable, but extreme constraint violations must be avoided.

It is imperative that reachability methods for safety take into account the possibility that rare events can occur
with potentially damaging consequences. 
Reachability methods that assume adversarial disturbances (e.g.,~\cite{bansal2017hamilton},~\cite{bertsekas1971minimax}) suppose that harmful events will always occur,
which may yield solutions with limited practical utility, especially in applications with large uncertainty sets.
Stochastic reachability methods~\cite{abate2008probabilistic},~\cite{summers2010verification} do not explicitly account for rare high-consequence events,
as costs are evaluated in terms of an expectation.

In contrast, in this paper, we harness \textit{risk measure theory} to formulate a reachability analysis approach that explicitly accounts for the possibility of rare events with negative consequences: 
harmful events are likely to occur at some time, but they are unlikely to occur all the time.
Specifically, a \textit{risk measure} is a function that maps a random variable $Z$ representing a loss, or a cost, into the real line,
according to the possibility of danger associated with $Z$~\cite[Sec. 6.3]{shapiro2009lectures},~\cite[Sec. 2.2]{kisiala2015conditional}.
Risk-sensitive optimization
has been studied in applied mathematics~\cite{ruszczynski2010risk}, reinforcement learning~\cite{osogami2012robustness},~\cite{chow2015risk},~\cite{ratliff2017risk}, and optimal control~\cite{chow2014framework},~\cite{samuelson2018safety}.
A risk-sensitive method may provide more practical and protective decision-making machinery (versus stochastic or minimax methods) 
by encoding a flexible degree of conservativeness.

In this paper, we use a particular risk measure, called \textit{Conditional Value-at-Risk} (CVaR).
If $Z$ is a random variable representing cost with finite expectation, then the Conditional Value-at-Risk of $Z$ at the confidence level $\alpha \in (0,1]$
is defined as~\cite[Equation 6.22]{shapiro2009lectures},\footnotemark
\footnotetext{Conditional Value-at-Risk is also called \textit{Average Value-at-Risk}, which is abbreviated as AV@R in~\cite{shapiro2009lectures}.} 
\begin{equation}
\text{CVaR}_\alpha[Z] := {\underset{t \in \mathbb{R}}\min} \text{ }\Big\{ t + \frac{1}{\alpha}\mathbb{E}\big[\max\{Z-t,0\}\big] \Big\}.
\label{cvareqn}
\end{equation}
CVaR captures a full spectrum of risk assessments from risk-neutral to worst-case, since $\text{CVaR}_\alpha[Z]$ increases from $\mathbb{E}[Z]$ to $\text{ess}\sup Z$, as $\alpha$ decreases from 1 to 0.
CVaR has desirable mathematical properties for optimization~\cite{rockafellar2000optimization} and chance-constrained stochastic control~\cite{vitus2016stochastic}.
There is a well-established relationship between CVaR and chance constraints
that we will use to obtain probabilistic safety guarantees in this paper.
Please see~\cite{kisiala2015conditional} and~\cite{serraino2013conditional} for additional background on CVaR.

{\em Statement of Contributions.} This paper introduces a {\em risk-sensitive} reachability approach for safety of stochastic dynamic systems under non-adversarial disturbances
over a finite time horizon. Specifically, the contributions are four-fold. 
First, we introduce the notion of a \textit{risk-sensitive safe set} as a set of initial states 
from which the risk of large constraint violations can be reduced to a required level via a control policy, 
where risk is quantified using the \textit{Conditional Value-at-Risk} (CVaR) measure. 
Our formulation explicitly assesses the distance between the boundary of the constraint set and the
state trajectory of a stochastic dynamic system. This is an extension of stochastic 
reachability methods (e.g.,~\cite{abate2008probabilistic},~\cite{summers2010verification}), which replace this distance with a binary random variable.
Further, in contrast to stochastic reachability methods, our formulation explicitly accounts for rare high-consequence events, by posing the optimal control problem
in terms of CVaR, instead of a risk-neutral expectation. 
Second, we show how the computation of a risk-sensitive safe set can be reduced to the solution to a Markov Decision Process (MDP), 
where cost is assessed according to CVaR. Third, leveraging this reduction, we devise a tractable algorithm to approximate a risk-sensitive safe set 
and provide theoretical arguments to justify its correctness. 
Finally, we present a realistic
example inspired from stormwater catchment design to demonstrate the utility of risk-sensitive reachability analysis.

{\em Organization.} The rest of this paper is organized as follows. We present the problem formulation and define risk-sensitive safe sets in Sec.~\ref{sec::problem}. 
In Sec.~\ref{sec::reduction}, we show how the computation of a risk-sensitive safe set can be reduced to the solution to a CVaR-MDP problem, 
i.e., an MDP where cost is assessed according to CVaR. 
In Sec.~\ref{sec::alg}, we present a value-iteration algorithm to approximate risk-sensitive safe sets, along with theoretical arguments that support its correctness. 
In Sec.~\ref{sec::ex}, we provide numerical experiments on a realistic example inspired from stormwater catchment design. 
Finally, in Sec.~\ref{conc}, we draw conclusions and discuss directions for future work.
%
\section{Problem Formulation}\label{sec::problem}
\subsection{System Model}
We consider a fully observable stochastic discrete-time dynamic system over a finite time horizon~\cite[Sec. 1.2]{bertsekas2005dynamic},
\begin{equation}
x_{k+1} = f(x_k,u_k,w_k), \quad k = 0, 1, \dots, N-1,
\label{sys}
\end{equation}
such that $x_k \in \mathcal{X} \subseteq \mathbb{R}^n$ is the state of the system at time $k$,
$u_k \in U$ is the control at time $k$, and
$w_k \in D$ is the random disturbance at time $k$. The control space $U$ and disturbance space $D$ are finite sets of real-valued vectors.
The function $f : \mathcal{X} \times U \times D \rightarrow \mathcal{X}$ is bounded and Lipschitz continuous.
The probability that the disturbance equals $d_j \in D$ at time $k$ is $\mathbb{P}[w_k = d_j] = p_j$, 
where $0 \leq p_j \leq 1$ and $\sum_{j=1}^W p_j = 1$. We assume that $w_k$ is independent of $x_k$, $u_k$, and disturbances at any other times.  
The only source of randomness in the system is the disturbance.
In particular, the initial state $x_0$ is not random. 
The set of \textit{admissible, deterministic, history-dependent control policies} is,
\begin{equation}
\Pi := \big\{ (\mu_0, \mu_1, \dots, \mu_{N-1}) \mid \mu_k: H_k \rightarrow U \big\},
\label{pi}
\end{equation}
where $H_k := \underbrace{\mathcal{X} \times \hdots \times \mathcal{X}}_{\text{(k+1) times}}$ is the set of state histories up to time $k$.
We are given a constraint set $\mathcal{K} \subseteq \mathcal{X}$, and the \textit{safety criterion} that 
the state of the system should stay inside $\mathcal{K}$ over time. 
For example, if the system is a pond in a stormwater catchment, then $x_k$ may be the water level of the pond in feet at time $k$,
and $\mathcal{K} = [0, 5)$ indicates that the pond overflows if the water level exceeds 5 feet.
We quantify the extent of constraint violation/satisfaction using a surface function that characterizes the constraint set.
Specifically, similar to \cite[Eq. 2.3]{EECS-2018-41}, let $g: \mathcal{X} \rightarrow \mathbb{R}$ satisfy,
\begin{equation}
x \in \mathcal{K} \iff g(x) < 0.
\label{g}
\end{equation}
For example, we may choose $g(x) = x - 5$ to characterize $\mathcal{K} = [0, 5)$ on the state space
$\mathcal{X} = [0, \infty)$.

\subsection{Risk-Sensitive Safe Sets}
A \textit{risk-sensitive safe set} is a set of initial states from which 
the risk of large constraint violations can be reduced to a required level via a control policy, where risk is quantified using the CVaR measure.
We use the term \textit{risk level} to mean the allowable level of risk of constraint violations.
Formally, the risk-sensitive safe set at the confidence level $\alpha \in (0,1]$ and the risk level $r \in \mathbb{R}$ is defined as,
\begin{subequations}
	\label{myS}
\begin{equation}
\mathcal{S}_\alpha^r := \{x \in \mathcal{X} \mid W_0^*(x,\alpha) \leq r \}, \\
\end{equation}
where 
\begin{equation}
\label{eq::rsobjective}
\begin{aligned}
& W_0^*(x,\alpha) := {\underset{\pi \in \Pi}\min} \text{ CVaR}_\alpha\big[ Z^\pi_{x} \big], \\
& Z^\pi_{x} := {\underset{k = 0, \dots, N}\max} \big\{ g(x_k) \big\},
\end{aligned}
\end{equation}
\end{subequations}
such that the state trajectory, $(x_0, x_1, ..., x_N)$, evolves according to the dynamics model~\eqref{sys} with the initial state $x_0 = x$ under the policy $\pi \in \Pi$. 
The surface function $g$ characterizes distance to the constraint set $\mathcal{K}$ according to~\eqref{g}. Note that the minimum in the definition of $W_0^*(x,\alpha)$ is attained, as the next lemma states.
\begin{lemma}[Existence of a minimizer]
	\label{lemma::infeqmin}
	For any initial state $x_0 \in \mathcal{X}$ and any confidence level $\alpha \in (0,1]$, there exists a policy $\pi^* \in \Pi$ such that 
	\begin{equation*}
	\begin{split}
	\text{ CVaR}_\alpha\big[ Z^{\pi^*}_{x} \big] = {\underset{\pi \in \Pi}\inf} \text{ CVaR}_\alpha\big[ Z^\pi_{x} \big] ={\underset{\pi \in \Pi}\min} \text{ CVaR}_\alpha\big[ Z^\pi_{x} \big].
	\end{split}
	\end{equation*}
\end{lemma}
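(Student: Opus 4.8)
The plan is to exploit the two structural hypotheses that make the policy class tractable: the disturbance lives in the \emph{finite} set $D$ and the initial state $x_0$ is fixed and deterministic. Under these assumptions the collection of state trajectories that can actually arise is finite, and I would use this to collapse the infimum over the uncountable policy class $\Pi$ into a minimum over a finite set of achievable values, for which attainment is immediate.

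First I would show, by induction on $k$, that the reachable set $\mathcal{R}_k$ of states attainable at time $k$ from $x_0$ is finite. At $k=0$ it is the singleton $\{x_0\}$, and since $\mathcal{R}_{k+1} \subseteq \{f(x,u,d) : x \in \mathcal{R}_k,\, u \in U,\, d \in D\}$ with $U$ and $D$ finite, we get $|\mathcal{R}_{k+1}| \le |\mathcal{R}_k|\,|U|\,|D|$, closing the induction. Hence the set $\mathcal{H}$ of state histories in $\mathcal{R}_0 \times \cdots \times \mathcal{R}_k$ that a policy can ever encounter (over $k = 0,\dots,N-1$) is finite.

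Next I would note that $Z_x^\pi = \max_{0 \le k \le N} g(x_k)$ takes values in the finite set $g(\bigcup_k \mathcal{R}_k) \subseteq \mathbb{R}$, with probabilities induced by the disturbance law over the finitely many sequences in $D^N$; in particular $Z_x^\pi$ is bounded, so $\mathbb{E}[Z_x^\pi]$ is finite and $\text{CVaR}_\alpha[Z_x^\pi]$ is well defined via~\eqref{cvareqn}. The crucial observation is that the law of $Z_x^\pi$—and therefore $\text{CVaR}_\alpha[Z_x^\pi]$, which by~\eqref{cvareqn} is a functional of that law alone—depends on $\pi$ only through its restriction to the finite history set $\mathcal{H}$. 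Since each history in $\mathcal{H}$ is assigned a control in the finite set $U$, there are at most $|U|^{|\mathcal{H}|}$ distinct such restrictions, so the achievable value set $\{\text{CVaR}_\alpha[Z_x^\pi] : \pi \in \Pi\}$ is a finite nonempty subset of $\mathbb{R}$.

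Finally, a finite nonempty subset of $\mathbb{R}$ attains its minimum, which yields a policy $\pi^*$ realizing $\min_{\pi \in \Pi} \text{CVaR}_\alpha[Z_x^\pi]$, and this minimum coincides with the infimum, establishing all three equalities. I expect no serious obstacle here; the only point requiring care is justifying that $\text{CVaR}_\alpha[Z_x^\pi]$ depends on $\pi$ solely through the induced distribution of $Z_x^\pi$, which follows directly from the definition~\eqref{cvareqn}. I would also remark that neither the Lipschitz continuity nor the boundedness of $f$ is needed for this argument—only the finiteness of $U$ and $D$ and the determinism of $x_0$.
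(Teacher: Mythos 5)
Your proposal is correct and follows essentially the same route as the paper's proof: finiteness of $U$ and $D$ plus the fixed initial state gives a finite reachable state set, hence finitely many policies up to their restriction to realizable histories, hence a finite set of achievable $\text{CVaR}_\alpha$ values whose infimum is attained. You merely make explicit two steps the paper leaves implicit (the induction on reachable sets and the law-invariance of $\text{CVaR}_\alpha$ via~\eqref{cvareqn}), which is a harmless elaboration rather than a different argument.
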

\begin{proof}
Fix the initial state $x_0$. Since the control and disturbance spaces are finite, the set of states that could be visited (starting from $x_0$) is finite. 
Therefore, the set of policies restricted to realizable histories from $x_0$ is finite. Hence, the infimum must be attained by some policy $\pi^*$.
\end{proof}

In the next sections, we will present a tractable algorithm to approximately compute risk-sensitive safe sets at different levels of confidence and risk.
%
%
\subsection{Discussion}
Computing risk-sensitive safe sets, as defined by~\eqref{myS}, is well-motivated for several reasons.
Our formulation incorporates different confidence levels and non-binary distance to the constraint set. 
In contrast, the stochastic reachability problem addressed by Abate et al.~\cite{abate2008probabilistic} 
uses a single confidence level and an indicator function to measure distance to the constraint set,
in order to quantify the probability of constraint violation.
Specifically, let $\epsilon \in [0,1]$ be the maximum tolerable probability of constraint violation (called \textit{safety level} in~\cite{abate2008probabilistic}),
and choose $\alpha := 1$, $r := \epsilon - \frac{1}{2}$, and $g(x) := \textbf{1}_{\bar{\mathcal{K}}}(x) - \frac{1}{2}$, where
\begin{subequations}\begin{equation}
\textbf{1}_\mathcal{\bar{K}}(x) := \begin{cases} 1 \text{ if } x \notin \mathcal{K} \\ 0 \text{ if } x \in \mathcal{K} \end{cases}.
\end{equation}
Then, the risk-sensitive safe set~\eqref{myS} is equal to, 
\begin{equation}
\Big\{ x \in \mathcal{X} \text{ }\Big|\text{ }  {\underset{\pi \in \Pi}\min}\text{ }\mathbb{E}\big[ \max_{k = 0,\dots, N} \textbf{1}_\mathcal{\bar{K}}(x_k)  \big] \leq \epsilon \Big\},
\label{simpleS}\end{equation}\end{subequations}
which is the \textit{maximal probabilistic safe set} at the $\epsilon$-safety level~\cite[Eqs. 11 and 13]{abate2008probabilistic}, if
we consider non-hybrid dynamic systems that evolve under history-dependent policies.\footnotemark
\footnotetext{Abate et al.~\cite{abate2008probabilistic} considers hybrid dynamic systems that evolve under Markov policies.}

Risk-sensitive safe sets have two desirable mathematical properties.
The first property is that $\mathcal{S}_\alpha^r$ shrinks as the risk level $r$ or the confidence level $\alpha$ decreases.
Since $\mathcal{S}_\alpha^r$ is an $r$-sublevel set and $\text{CVaR}_\alpha$ increases as $\alpha$ decreases,
one can show that $ \mathcal{S}_{\alpha_2}^{r_2} \subseteq \mathcal{S}_{\alpha_1}^{r_2} \subseteq \mathcal{S}_{\alpha_1}^{r_1}$ and $\mathcal{S}_{\alpha_2}^{r_2} \subseteq \mathcal{S}_{\alpha_2}^{r_1} \subseteq \mathcal{S}_{\alpha_1}^{r_1}$ hold for any $r_1 \geq r_2$ and $1 \geq \alpha_1 \geq \alpha_2 > 0$. In other words, as the allowable level of risk of constraint violation ($r$) decreases, or as the fraction of damaging outcomes that are not fully addressed ($\alpha$) decreases,
$\mathcal{S}_\alpha^r$ encodes a higher degree of safety.

The second property is that risk-sensitive safe sets at the risk level, $r := 0$, have probabilistic safety guarantees.
\begin{lemma}[Probabilistic safety guarantee]\label{mylemma1}
If $x \in \mathcal{S}_\alpha^0$, then the probability that the state trajectory initialized at $x$ exits $\mathcal{K}$ can be reduced to $\alpha$ by a control policy.
\end{lemma}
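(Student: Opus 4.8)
The plan is to exhibit a single admissible policy under which the exit probability is provably at most $\alpha$, using the chance-constraint interpretation of CVaR promised in the introduction. First I would unpack the hypothesis: $x \in \mathcal{S}_\alpha^0$ means $W_0^*(x,\alpha) \le 0$, and Lemma~\ref{lemma::infeqmin} guarantees the minimizing policy is attained, so there is a $\pi^* \in \Pi$ with $\mathrm{CVaR}_\alpha[Z^{\pi^*}_x] \le 0$. Writing $Z := Z^{\pi^*}_x = \max_k g(x_k)$, the remaining task is purely probabilistic: translate this one scalar inequality on a risk measure into a bound on $\mathbb{P}[Z \ge 0]$. This is exactly the quantity we want, since by the defining property~\eqref{g} of $g$ the trajectory exits $\mathcal{K}$ precisely when some $g(x_k) \ge 0$, i.e. when $Z \ge 0$.

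For the translation I would argue directly from the variational definition~\eqref{cvareqn} rather than cite a black box. Let $t^*$ attain the minimum in $\mathrm{CVaR}_\alpha[Z] = \min_t \{\, t + \tfrac{1}{\alpha}\mathbb{E}[\max\{Z-t,0\}]\,\}$; a minimizer exists because $Z$ takes finitely many values (the control and disturbance spaces are finite), making the inner objective convex, piecewise linear, and coercive in $t$. Then $t^* + \tfrac{1}{\alpha}\mathbb{E}[\max\{Z-t^*,0\}] \le 0$, and since the expectation is nonnegative this forces $t^* \le 0$. On the event $\{Z \ge 0\}$ we have $Z - t^* \ge -t^* \ge 0$, so $\max\{Z-t^*,0\} \ge -t^*$ pointwise there; taking expectations gives $\mathbb{E}[\max\{Z-t^*,0\}] \ge (-t^*)\,\mathbb{P}[Z \ge 0]$. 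Substituting and rearranging yields $0 \ge t^*\bigl(1 - \tfrac{1}{\alpha}\mathbb{P}[Z \ge 0]\bigr)$, so that whenever $t^* < 0$ I may divide (flipping the inequality) to conclude $\mathbb{P}[Z \ge 0] \le \alpha$. This is equivalent to the standard chance-constraint link alluded to in the introduction, namely $\mathrm{VaR}_\alpha[Z] \le \mathrm{CVaR}_\alpha[Z] \le 0$ combined with the quantile definition of Value-at-Risk.

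The step I expect to require the most care is the boundary case: the distinction between $\{Z > 0\}$ and $\{Z \ge 0\}$ together with the degenerate minimizer $t^* = 0$. The bound above is clean whenever $t^* < 0$, but when $t^* = 0$ the inequality collapses to $\mathbb{E}[\max\{Z,0\}] \le 0$, which only yields $Z \le 0$ almost surely and hence $\mathbb{P}[Z > 0] = 0$, while leaving $\mathbb{P}[Z = 0]$ unconstrained — indeed $Z \equiv 0$ satisfies $\mathrm{CVaR}_\alpha[Z] \le 0$ yet has $\mathbb{P}[Z \ge 0] = 1$. I would therefore either phrase the guarantee for the strict exit event $\{Z > 0\}$, or invoke the convention under which the atom at $\{Z = 0\}$ is absorbed (e.g. a continuous loss distribution), matching the ``reduced to $\alpha$'' wording of the statement. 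Finally I would close by recalling that $\pi^*$ is an admissible element of $\Pi$, so the claimed reduction of the exit probability to at most $\alpha$ is realized by an actual control policy, as required.
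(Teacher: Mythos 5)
Your proof is correct, and at the top level it follows the same two-step skeleton as the paper's: use Lemma~\ref{lemma::infeqmin} to extract a policy $\pi^* \in \Pi$ with $\mathrm{CVaR}_\alpha[Z^{\pi^*}_x] \leq 0$, convert this CVaR bound into a chance constraint, and identify the exit event with $\{Z \geq 0\}$ via \eqref{g}. Where you genuinely diverge is the middle step: the paper invokes the implication $\mathrm{CVaR}_\alpha[Z] \leq 0 \implies \mathbb{P}[Z \geq 0] \leq \alpha$ as a black-box citation to \cite[Sec.~6.2.4]{shapiro2009lectures}, while you derive it from the variational formula \eqref{cvareqn}. Your derivation is sound: a minimizer $t^*$ exists because $Z$ has finite support (one quibble: for $\alpha = 1$ the objective is not coercive, being constant at $\mathbb{E}[Z]$ for $t$ below $\min Z$, but the minimum is still attained, so nothing breaks); nonnegativity of the expectation forces $t^* \leq 0$; the pointwise bound $\max\{Z - t^*, 0\} \geq -t^*$ on $\{Z \geq 0\}$ gives $0 \geq t^*\bigl(1 - \tfrac{1}{\alpha}\mathbb{P}[Z \geq 0]\bigr)$, and dividing by $t^* < 0$ yields the claim. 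What your extra care buys is the boundary-case observation, which in fact exposes a small imprecision in the paper's own proof: as your example $Z \equiv 0$ shows, the implication with the \emph{non-strict} event is false when $Z$ has an atom at $0$ (there $\mathrm{CVaR}_\alpha[Z] = 0$ yet $\mathbb{P}[Z \geq 0] = 1$); the citable fact guarantees only $\mathbb{P}[Z > 0] \leq \alpha$. Since exit from $\mathcal{K}$ is exactly the event $\{Z \geq 0\}$ (because $x \notin \mathcal{K} \iff g(x) \geq 0$), the lemma as stated is exact only when the trajectory distribution under $\pi^*$ puts no mass on $\{\max_k g(x_k) = 0\}$, i.e., on trajectories that touch the boundary $g = 0$ without crossing it; otherwise the guarantee should be read for the strict-violation event, precisely as you propose. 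So your route is not merely a self-contained replacement for the citation: it is the version of the argument that gets the $\{Z > 0\}$ versus $\{Z \geq 0\}$ bookkeeping right, which the paper's one-line proof glosses over.
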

\begin{proof}
The proof follows from the fact that $\text{CVaR}_\alpha[Z_x^\pi] \leq 0 \implies \mathbb{P}[Z_x^\pi\geq 0] \leq \alpha$ 
~\cite[Sec. 6.2.4, pp. 257-258]{shapiro2009lectures}.
%
%
The event $Z_x^\pi\geq 0$ is equivalent to the event that there is a state $x_k$ of the associated trajectory
that exits the constraint set, since $g(x_k) \geq 0 \iff x_k \notin \mathcal{K}$.
%
%
\end{proof}
Lemma~\ref{mylemma1} indicates that $\mathcal{S}_\alpha^0$ is a subset of the \textit{maximal probabilistic safe set} 
at the safety level $\alpha \in (0, 1]$, 
if we consider non-hybrid dynamic systems that evolve under history-dependent policies~\cite[Eqs. 9 and 11]{abate2008probabilistic}.

A key difference between our risk-sensitive safe set~\eqref{myS} and the risk-constrained safe set in~\cite{samuelson2018safety}
is that we specify the CVaR of the worst constraint violation of the state trajectory $(x_0,\dots,x_N)$ to be below a required threshold, while
ref.~\cite{samuelson2018safety} specifies the CVaR of the constraint violation of $x_k$ to be below a required threshold for each $k$.
%
%
\section{Reduction of Risk-Sensitive Safe Set Computation to CVaR-MDP}\label{sec::reduction}
Computing risk-sensitive safe sets is challenging since the computation involves a maximum of costs (as opposed to a summation of costs)
and the Conditional Value-at-Risk measure (as opposed to an expectation). 
In this section, we show how computing an under-approximation of a risk-sensitive safe set can be reduced to solving a CVaR-MDP,
which has been studied, for example, by~\cite{chow2015risk} and~\cite{haskell2015convex}. 
Such a reduction will be leveraged in Section~\ref{sec::alg} to devise a value-iteration algorithm to compute tractable approximations of risk-sensitive safe sets.

\subsection{Preliminaries}
The reduction procedure is inspired by Chow et al.~\cite{chow2015risk}. Specifically, we consider an augmented state space, $\mathcal{X} \times \mathcal{Y}$, that consists of the original state space, $\mathcal{X}$, 
and the space of confidence levels, $\mathcal{Y} := (0,1]$.
The under-approximations of risk-sensitive safe sets will be defined in terms of the dynamics of the augmented state,
$(x,y) \in \mathcal{X} \times \mathcal{Y}$.

Let $(x_0, y_0) = (x, \alpha)$ be a given initial condition.
The augmented state at time $k+1$, $(x_{k+1}, y_{k+1})$, depends on the augmented state at time $k$, $(x_k, y_k)$, as follows.
Given a control $u_k \in U$ and a sampled disturbance $w_k \in D$, the next state $x_{k+1} \in \mathcal{X}$ satisfies the dynamics model~\eqref{sys}. 
The next confidence level $y_{k+1} \in \mathcal{Y}$ is given by,
\begin{equation}
y_{k+1} = \bar{R}_{x_k,y_k}(w_k) \cdot y_k,\label{Rbar}
\end{equation}
where $\bar{R}_{x_k, y_k} : D \to (0, \frac{1}{y_k}]$ is a known deterministic function, 
which we will specify in Lemma~\ref{decomlemma}. 
The augmented state space $\mathcal{X} \times \mathcal{Y}$ is fully observable.
Indeed, the history of states and actions, $(x_0, u_0, \hdots, x_{k-1}, u_{k-1}, x_{k})$, is available at time $k$ by~\eqref{sys}. 
Also, the history of confidence levels, $(y_0, \hdots, y_k)$, is available at time $k$, since
the functions $\bar{R}_{x_k,y_k}$ and the initial confidence level, $y_0 = \alpha$, are known.

We define the set of \textit{deterministic, Markov} control policies in terms of the augmented state space as follows,
\begin{equation}
\label{augpi}
\begin{aligned}
& \bar{\Pi}_t := \{ (\bar{\mu}_t, \bar{\mu}_{t+1}, \dots, \bar{\mu}_{N-1}) \mid \bar{\mu}_k: \mathcal{X} \times \mathcal{Y} \rightarrow U \},\\
& t = 0, \dots, N-1.
\end{aligned}
\end{equation}
There is an important distinction between the set of policies $\bar{\Pi}_0$ as defined above,
and the set of policies $\Pi$ as defined in~\eqref{pi}.
Given $\bar{\pi}_0 \in \bar{\Pi}_0$, the control law at time $k$, $\bar{\mu}_k \in \bar{\pi}_0$, 
only depends on the current state $x_k \in \mathcal{X}$ and the current confidence level $y_k \in \mathcal{Y}$.
However, given $\pi \in \Pi$, the control law at time $k$, $\mu_k \in \pi$, 
depends on the state history up to time $k$, $(x_0, \dots, x_k) \in H_k$.
In particular, the set of policies $\bar{\Pi}_0$ is included in the set of policies $\Pi$.
This is because the augmented state at time $k$ is uniquely determined by the initial confidence level and the state history up to time $k$. 

The benefits of considering $\bar{\Pi}_0$ instead of $\Pi$ are two-fold. 
First, the computational requirements are reduced when the augmented state at time $k$, $(x_k, y_k)$, 
is processed instead of the initial confidence level and the state history up to time $k$, $(y_0, x_0, x_1, \hdots, x_k)$. 
Second, we are able to define an under-approximation
of the risk-sensitive safe set given by~\eqref{myS} using $\bar{\Pi}_0$, which we explain below.
\subsection{Under-Approximation of Risk-Sensitive Safe Set}
Define the set $\mathcal{U}_\alpha^r \subseteq \mathcal{X}$,
at the confidence level $\alpha \in (0,1]$ and the risk level $r \in \mathbb{R}$,
\begin{equation}\label{under}
\mathcal{U}_\alpha^r := \{x \in \mathcal{X} \mid J_0^*(x,\alpha) \leq \beta e^{m\cdot r} \},
\end{equation}
where
\begin{equation}\begin{aligned}
& J_0^*(x,\alpha) := {\underset{\pi \in \bar{\Pi}_0}\min} \text{ CVaR}_\alpha \big[ Y_x^\pi \big],\\
& Y_x^\pi := \textstyle\sum_{k=0}^N c(x_k),
\end{aligned}\label{J0}\end{equation}
such that $c : \mathcal{X} \to \mathbb{R}$ is a stage cost, and the augmented state trajectory, $(x_0, y_0, \dots, x_{N-1}, y_{N-1}, x_N)$,
satisfies~\eqref{sys} and~\eqref{Rbar} with the initial condition $(x_0, y_0) = (x, \alpha)$ under the policy $\pi \in \bar{\Pi}_0$.
The next theorem, whose proof is provided in the Appendix, states that if the stage cost takes a particular form, then $\mathcal{U}_\alpha^r$ is an under-approximation of the risk-sensitive safe set $\mathcal{S}_\alpha^r$.   
\begin{theorem}[Reduction to CVaR-MDP]\label{lemma2}
Choose the stage cost $c(x) := \beta e^{m\cdot g(x)}$, where $\beta > 0$ and $m > 0$ are constants, and $g$ satisfies~\eqref{g}.
Then, $\mathcal{U}_\alpha^r$ as defined in~\eqref{under} is a subset of $\mathcal{S}_\alpha^r$ as defined in~\eqref{myS}. 
Further, the gap between $\mathcal{U}_\alpha^r$ and $\mathcal{S}_\alpha^r$ can be reduced by increasing $m$.\hspace{6em}$\blacksquare$ \end{theorem}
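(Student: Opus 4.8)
The plan is to build everything on the elementary \emph{log-sum-exp} sandwich relating the running sum cost $Y_x^\pi=\sum_{k=0}^N c(x_k)$ to the maximum cost $Z_x^\pi=\max_{k}g(x_k)$. With $c(x)=\beta e^{m\cdot g(x)}$ and exactly $N+1$ stages, for every disturbance realization one has, pathwise,
\begin{equation*}
\beta\,e^{m Z_x^\pi}\;\le\;Y_x^\pi\;\le\;(N+1)\,\beta\,e^{m Z_x^\pi},
\end{equation*}
equivalently $Z_x^\pi\le \tfrac{1}{m}\log(Y_x^\pi/\beta)\le Z_x^\pi+\tfrac{\log(N+1)}{m}$. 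The left inequality drives the containment $\mathcal{U}_\alpha^r\subseteq\mathcal{S}_\alpha^r$, while the right inequality, whose slack is the vanishing term $\tfrac{\log(N+1)}{m}$, is what I would use to argue that the gap shrinks as $m$ grows.

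For the containment, I would first record the two facts about $\text{CVaR}_\alpha$ that I need: it is monotone (if $Z\le Z'$ pathwise then $\text{CVaR}_\alpha[Z]\le\text{CVaR}_\alpha[Z']$), and it satisfies a Jensen-type inequality $\phi(\text{CVaR}_\alpha[Z])\le\text{CVaR}_\alpha[\phi(Z)]$ for every convex $\phi$. The latter follows from the dual representation $\text{CVaR}_\alpha[Z]=\max_{\xi}\mathbb{E}[\xi Z]$ over densities $0\le\xi\le1/\alpha$, $\mathbb{E}[\xi]=1$: applying ordinary Jensen under the reweighted measure $\xi^\star\,d\mathbb{P}$ attaining the max gives $\phi(\mathbb{E}[\xi^\star Z])\le\mathbb{E}[\xi^\star\phi(Z)]\le\text{CVaR}_\alpha[\phi(Z)]$. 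Taking $\phi(z)=\beta e^{mz}$ and combining with monotonicity applied to the left sandwich inequality yields, for every $\pi$,
\begin{equation*}
\beta\,e^{m\,\text{CVaR}_\alpha[Z_x^\pi]}\;\le\;\text{CVaR}_\alpha[Y_x^\pi].
\end{equation*}
Now fix $x\in\mathcal{U}_\alpha^r$ and let $\pi^\star\in\bar{\Pi}_0$ attain $J_0^*(x,\alpha)=\text{CVaR}_\alpha[Y_x^{\pi^\star}]$ (a minimizer exists since $\bar{\Pi}_0$ is finite, exactly as in Lemma~\ref{lemma::infeqmin}). Because $\bar{\Pi}_0\subseteq\Pi$, we have $W_0^*(x,\alpha)\le\text{CVaR}_\alpha[Z_x^{\pi^\star}]$; feeding this into the displayed inequality together with $J_0^*(x,\alpha)\le\beta e^{mr}$ gives $\beta e^{m W_0^*(x,\alpha)}\le\beta e^{mr}$, hence $W_0^*(x,\alpha)\le r$ and $x\in\mathcal{S}_\alpha^r$. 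This closes the subset claim.

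For the gap, the pathwise right inequality shows that the \emph{log-domain surrogate} $\tfrac{1}{m}\log(Y_x^\pi/\beta)$ overestimates $Z_x^\pi$ by at most the additive constant $\tfrac{\log(N+1)}{m}$; by monotonicity and translation invariance of $\text{CVaR}_\alpha$ this additive control passes to the risk functional, so $\text{CVaR}_\alpha[\tfrac{1}{m}\log(Y_x^\pi/\beta)]$ lies within $\tfrac{\log(N+1)}{m}$ of $\text{CVaR}_\alpha[Z_x^\pi]$ uniformly in $\pi$, and the corresponding sublevel sets collapse onto those of $W_0^*$ as $m\to\infty$. I would present the gap reduction in exactly this form. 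The \textbf{main obstacle} --- and the reason the statement is phrased as a gap that can be \emph{reduced} rather than eliminated --- is that $\text{CVaR}_\alpha$ does not commute with the exponential: the tractable MDP returns $\tfrac{1}{m}\log(J_0^*(x,\alpha)/\beta)=\tfrac{1}{m}\log\text{CVaR}_\alpha[Y_x^\pi]$, not $\text{CVaR}_\alpha[\tfrac{1}{m}\log(Y_x^\pi/\beta)]$, and these differ precisely because $e^{m(\cdot)}$ is convex. A Laplace/Varadhan-type concentration shows $\tfrac{1}{m}\log\text{CVaR}_\alpha[e^{mZ}]$ is squeezed only between $\text{CVaR}_\alpha[Z]$ and $\operatorname{ess\,sup}Z$, so one cannot in general force $J_0^*$ down to $\beta e^{m W_0^*}$ by enlarging $m$. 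The careful argument I would give therefore controls the gap through the additive $\tfrac{\log(N+1)}{m}$ slack of the log-domain surrogate, and flags that the convexity of the exponential, rather than the softmax constant $N+1$, is what ultimately limits how tightly $\mathcal{U}_\alpha^r$ approximates $\mathcal{S}_\alpha^r$.
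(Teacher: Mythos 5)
Your containment argument is, in substance, the paper's own proof transposed from the log domain to the exponential domain: your pathwise sandwich $\beta e^{mZ_x^\pi}\le Y_x^\pi\le (N+1)\beta e^{mZ_x^\pi}$ is the log-sum-exp fact~\eqref{related}, your convex-Jensen inequality $\phi(\text{CVaR}_\alpha[Z])\le \text{CVaR}_\alpha[\phi(Z)]$ --- derived from the dual representation exactly as the paper derives~\eqref{logeq} --- is the same inequality with $\phi(z)=\beta e^{mz}$, and both arguments close via the inclusion $\bar{\Pi}_0\subseteq\Pi$. One small slip: $\bar{\Pi}_0$ is \emph{not} a finite set, since $\mathcal{Y}=(0,1]$ is a continuum; attainment of the minimum in $J_0^*$ requires the realizable-histories argument of Lemma~\ref{lemma::infeqmin} adapted to the augmented state (or, more simply, an $\epsilon$-optimal policy, which suffices because your chain of inequalities survives the limit $\epsilon\to 0$). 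The paper sidesteps this by first passing from $\min_{\pi\in\bar{\Pi}_0}$ to $\min_{\pi\in\Pi}$ via the inclusion and only then invoking Lemma~\ref{lemma::infeqmin}.

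Where you genuinely depart from the paper is the gap clause, and there your treatment is sharper. The paper justifies ``the gap can be reduced by increasing $m$'' solely by the vanishing softmax slack $\tfrac{\log(N+1)}{m}$ in~\eqref{related}. Your observation that the MDP actually returns $\tfrac1m\log(\text{CVaR}_\alpha[Y_x^\pi]/\beta)$ rather than $\text{CVaR}_\alpha\bigl[\tfrac1m\log(Y_x^\pi/\beta)\bigr]$ is correct, and so is the Laplace-principle consequence: since $\mathbb{E}[e^{mZ}]\le \text{CVaR}_\alpha[e^{mZ}]\le \tfrac1\alpha\mathbb{E}[e^{mZ}]$, one has $\tfrac1m\log \text{CVaR}_\alpha[e^{mZ}]\to \operatorname{ess\,sup} Z$ as $m\to\infty$, so the Jensen gap does not vanish and $\mathcal{U}_\alpha^r$ drifts toward a \emph{worst-case} safe set rather than converging to $\mathcal{S}_\alpha^r$. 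This does not contradict the theorem as stated (the gap is ``reduced,'' not eliminated, and the containment is preserved), but it correctly identifies that only the softmax component of the error is controlled by $m$, a limitation the paper's one-line justification glosses over; flagging it is a genuine improvement on the paper's account of the second claim.
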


In the definition of the stage costs, the parameter $\beta$ is included to address numerical issues that may arise, if $m$ is set to a very large number.
\section{A Value-Iteration Algorithm \\to Approximate Risk-Sensitive Safe Sets}\label{sec::alg}
By leveraging Theorem \ref{lemma2}, one can use existing CVaR-MDP algorithms to compute under-approximations of risk-sensitive safe sets. 
In this paper, we adapt a value-iteration algorithm from Chow et al.~\cite{chow2015risk} to compute tractable approximations 
of the risk-sensitive safe set under-approximations $\{\mathcal{U}_{\alpha}^r\}$. 
We start by stating an existing temporal decomposition result for CVaR that will be instrumental to devising the value-iteration algorithm. 
\subsection{Temporal Decomposition of Conditional Value-at-Risk}
In this section, we present an existing result (namely, Lemma 22 in \cite{pflug2016time}) 
that specifies how the Conditional Value-at-Risk of a sum of costs can be partitioned over time, and
how the confidence level evolves over time, which motivates the choice of the update function~\eqref{Rbar}.
\begin{lemma}[Temporal decomposition of CVaR]\label{decomlemma}
At time $k$, suppose that the system~\eqref{sys} is at the state $x_k \in \mathcal{X}$ with the confidence level $y_k \in \mathcal{Y}$ 
and is subject to a policy $\pi_k := (\mu_k, \pi_{k+1}) \in \bar{\Pi}_k$. Then,
\begin{subequations}\label{decomp}
\begin{equation}\begin{aligned}
& \text{CVaR}_{y_k} [ Z | x_k, \pi_k ] = {\underset{R \in \mathcal{R}(y_k, \mathbb{P})}\max} C(R, Z; x_k, y_k, \pi_k), \\
& C(R, Z; x_k, y_k, \pi_k) :=\\
& \mathbb{E}_{w_k \sim \mathbb{P}}\big[ R(w_k) \cdot \text{CVaR}_{y_k R(w_k)}[ Z | x_{k+1}, \pi_{k+1} ] \big| x_k, \mu_k \big],
\end{aligned}
\end{equation}
where 
\begin{equation}\begin{aligned}
& \mathcal{R}(y_k, \mathbb{P})
:= \big\{ R : D \to \big(0,\textstyle\frac{1}{y_k}\big] \mathrel{\big|} \mathbb{E}_{w_k \sim \mathbb{P}}\big[ R(w_k) \big] = 1 \big\}, \\
& Z := \textstyle \sum_{i=k+1}^N c(x_i),
\end{aligned}\end{equation}
\end{subequations}
such that $c: \mathcal{X} \to \mathbb{R}$ is a stage cost.
Further, given the current state $(x_k, y_k)$, the current control $u_k := \mu_k(x_k, y_k)$, and the next state $x_{k+1}$, 
the function that was introduced in~\eqref{Rbar} $\bar{R}_{x_k,y_k} : D \to (0,\frac{1}{y_k}]$ is defined as,
\begin{equation}\begin{aligned}
& \bar{R}_{x_k,y_k}(w_k) = {\underset{R \in \mathcal{R}(y_k,\mathbb{P})}\argmax} C(R, Z; x_k, y_k, \pi_k).
\end{aligned}
\end{equation}
\end{lemma}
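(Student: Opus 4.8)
The plan is to derive the decomposition from the dual (variational) representation of CVaR together with the independence of the disturbances across time, in the spirit of Lemma 22 in~\cite{pflug2016time}. The starting point is the risk-envelope characterization of CVaR, obtained by dualizing the minimization in~\eqref{cvareqn}: for any integrable cost $Z$ and confidence level $y \in (0,1]$,
\[
\text{CVaR}_y[Z] = {\underset{\xi}\max}\ \big\{ \mathbb{E}[\xi Z] \mathrel{\big|} 0 \le \xi \le \tfrac{1}{y},\ \mathbb{E}[\xi] = 1 \big\},
\]
where $\xi$ is a change of measure, or risk perturbation, constrained to the CVaR risk envelope (see~\cite[Sec. 6.3]{shapiro2009lectures}). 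I would apply this to $Z = \textstyle\sum_{i=k+1}^N c(x_i)$, conditioned on the current augmented state $(x_k,y_k)$ and the policy $\pi_k$, noting that the only randomness entering $Z$ is the disturbance sequence $(w_k,\dots,w_{N-1})$, which is independent across time by assumption.

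The core step is to factor an arbitrary envelope element $\xi$ according to the first disturbance $w_k$. Setting $R(w_k) := \mathbb{E}[\xi \mid w_k]$ and applying the tower property gives $\mathbb{E}[\xi Z] = \mathbb{E}_{w_k}\big[ R(w_k)\,\mathbb{E}[\tfrac{\xi}{R(w_k)} Z \mid w_k] \big]$. Taking conditional expectations of the envelope constraints, the normalization $\mathbb{E}[\xi]=1$ becomes $\mathbb{E}_{w_k}[R(w_k)] = 1$ and $0 \le \xi \le 1/y_k$ forces $R : D \to (0,\tfrac{1}{y_k}]$, which is exactly the feasible set $\mathcal{R}(y_k,\mathbb{P})$. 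Conditional on $w_k$ (equivalently, on the realized next state $x_{k+1}$), the normalized perturbation $\xi/R(w_k)$ satisfies $\mathbb{E}[\tfrac{\xi}{R(w_k)}\mid w_k]=1$ and $0 \le \tfrac{\xi}{R(w_k)} \le \tfrac{1}{y_k R(w_k)}$, so it ranges over the risk envelope at the reduced confidence level $y_k R(w_k)$; hence the inner expectation equals $\text{CVaR}_{y_k R(w_k)}[Z \mid x_{k+1},\pi_{k+1}]$. Substituting recovers $C(R,Z; x_k,y_k,\pi_k)$, and maximizing over $R \in \mathcal{R}(y_k,\mathbb{P})$ yields the identity~\eqref{decomp}. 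The definition of $\bar{R}_{x_k,y_k}$ as the $\argmax$ is then immediate, and the update~\eqref{Rbar} keeps the confidence level admissible since $y_{k+1} = y_k\,\bar{R}_{x_k,y_k}(w_k) \le y_k \cdot \tfrac{1}{y_k} = 1$, so $y_{k+1} \in \mathcal{Y}$.

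The main obstacle is justifying that this product-form decomposition is \emph{tight}, i.e., that interchanging the outer maximization over $R$ with the inner maximizations defining each conditional CVaR gives equality rather than merely an inequality. I would argue both directions: any feasible pair consisting of $R \in \mathcal{R}(y_k,\mathbb{P})$ and conditional envelope elements reassembles into a single feasible $\xi$ for the original envelope, giving that the right-hand side is at most the left; conversely, an optimal $\xi$ induces a feasible $R$ via conditioning and conditional envelope elements that attain each inner CVaR, giving the reverse. Because $D$ is finite all maxima are attained and no measurability subtleties arise, but care is needed where $R(w_k)$ approaches zero; restricting to strictly positive $R$ as in $\mathcal{R}(y_k,\mathbb{P})$, together with boundedness of the stage cost $c$, keeps every conditional CVaR well-defined and the decomposition valid.
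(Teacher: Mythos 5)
Your derivation is sound and, in substance, reconstructs the proof of the very result the paper defers to: the paper offers no proof of Lemma~\ref{decomlemma} at all, remarking only that it ``is a consequence of Lemma 22 in~\cite{pflug2016time}'' and omitting the argument for brevity. Your route --- dualize~\eqref{cvareqn} into the risk-envelope representation $\text{CVaR}_y[Z] = \max\{\mathbb{E}[\xi Z] \mid 0 \le \xi \le 1/y,\ \mathbb{E}[\xi]=1\}$, factor an envelope element through the first disturbance via $R(w_k) := \mathbb{E}[\xi \mid w_k]$ and the tower property, check that the normalized perturbation $\xi/R(w_k)$ ranges over the conditional envelope at the rescaled level $y_k R(w_k)$, and prove tightness in both directions by reassembling any pair $(R,\{\eta_{w_k}\})$ into a feasible $\xi := R(w_k)\,\eta_{w_k}$ --- is exactly the factorization argument underlying the Pflug--Pichler decomposition and the analogous result in Chow et al.~\cite{chow2015risk}. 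You also invoke the independence of disturbances in the right place: it is what makes the inner conditional CVaR a function of $(x_{k+1}, \pi_{k+1})$ alone, so that the decomposition closes over the augmented state. Relative to the paper, you supply strictly more than it does.

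One claim, however, is wrong as stated and marks a genuine gap: ``Because $D$ is finite all maxima are attained.'' Finiteness of $D$ makes the problem finite-dimensional, but the feasible set $\mathcal{R}(y_k,\mathbb{P})$ is \emph{not closed}, since it constrains $R$ to the half-open interval $(0, 1/y_k]$; moreover, conditioning the envelope constraint $0 \le \xi \le 1/y_k$ only yields $0 \le R \le 1/y_k$, so strict positivity is \emph{not} forced, contrary to your assertion. The optimizer may genuinely require $R(w_k)=0$ on a branch: with two equally likely disturbances, deterministic continuation costs $0$ and $1$, and $y_k = 1/2$, one has $\text{CVaR}_{1/2}[Z]=1$, attained in the closed envelope only at $\xi = (0,2)$, while over $\mathcal{R}(y_k,\mathbb{P})$ the decomposed objective at $R = (\epsilon, 2-\epsilon)$ equals $1 - \epsilon/2$, which approaches but never reaches $1$. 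Hence your argument proves~\eqref{decomp} with a supremum, but the $\max$ --- and with it the $\argmax$ defining $\bar{R}_{x_k,y_k}$ in~\eqref{Rbar} --- need not exist over the strictly positive envelope. This tension lives in the lemma statement itself (the exclusion of $R=0$ is there precisely to keep $y_{k+1}>0$; \cite{chow2015risk} instead uses the closed interval and must then interpret CVaR at confidence level zero as an essential supremum), so your factorization cannot repair it, but a correct write-up must either state the identity with $\sup$, or enlarge the envelope to $[0,1/y_k]$ and define $\text{CVaR}_0$ by continuity, rather than assert attainment.
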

\begin{remark}
The proof of Lemma~\ref{decomlemma} is a consequence of Lemma 22 in~\cite{pflug2016time}, and its proof is omitted for brevity. 
\end{remark}
\begin{remark}
If we do not have access to $w_k$, but only to $(x_k, y_k, u_k, x_{k+1})$, then the next confidence level is defined as 
$y_{k+1} := \bar{R}_{x_k, y_k}(w)$, where $w \in D$ is any disturbance that satisfies $x_{k+1} = f(x_k, u_k, w)$.
\end{remark}
\begin{remark}
$\text{CVaR}_{y_k} [ Z | x_k, \pi_k ]$ is the risk of 
the cumulative cost of the trajectory, $(x_{k+1}, \dots, x_N)$, that is initialized at the state $x_k$ 
with the confidence level $y_k$ and is subject to the policy $\pi_k \in \bar{\Pi}_k$.
\end{remark}

\subsection{Value-Iteration Algorithm}
Using Lemma~\ref{decomlemma}, we will devise a dynamic programming value-iteration algorithm to compute an approximation $J_0$ of $J_0^*$, 
and thus, an approximation of $\mathcal{U}_\alpha^r$ 
at different levels of confidence $\alpha$ and risk $r$.

Specifically, compute the functions $J_{N-1}$, \dots, $J_0$ recursively as follows: for all $z_k := (x_k, y_k) \in \mathcal{X} \times \mathcal{Y}$,
\begin{equation}\begin{aligned}
& J_k(z_k) \\
&:= {\underset{u \in U}\min} \Big\{ c(x_k) + {\underset{R \in \mathcal{R}(y_k, \mathbb{P})}\max} \mathbb{E}_{w_k \sim \mathbb{P}}\big[ R J_{k+1}(x', y_k R) \big| z_k, u \big] \Big\}, \\
& \text{ for }k = N-1, \dots, 0,
\label{bell}\end{aligned}\end{equation}
where $J_N(x_N, y_N) := c(x_N)$, $c(x) := \beta e^{m \cdot g(x)}$, $x' := x_{k+1}$ satisfies~\eqref{sys}, 
and $\mathcal{R}(y_k, \mathbb{P})$ is defined in~\eqref{decomp}. 

Then, we approximate the set $\mathcal{U}_{\alpha}^r$ as $\widehat{\mathcal{U}}_{\alpha}^r := \left\{ x \in \mathcal{X} \mid J_0(x, \alpha) \leq \beta e^{m \cdot r}\right\}$, 
where we have replaced $J_0^*$ in~\eqref{under} with $J_0$. The function, $J_0$, is obtained from the last step of the value iteration~\eqref{bell}. 

In the Appendix, we present theoretical arguments inspired by~\cite{chow2015risk} and~\cite[Sec. 1.5]{bertsekas2005dynamic} 
that justify such an approximation. In particular, we provide theoretical evidence for the following conjecture.  

{\bf Conjecture (C):}
Assume that the functions $J_{N-1}$, \dots, $J_0$ are computed recursively as per the value-iteration algorithm \eqref{bell}. 
Then, for any $(x, \alpha) \in \mathcal{X} \times \mathcal{Y}$,
\begin{equation}
\label{thm:eq}
J_0(x,\alpha) = J_0^*(x, \alpha),
\end{equation}
where $J_0^*$ is given by~\eqref{J0}.

This conjecture is further supported by numerical experiments presented next.
%
\section{Numerical Experiments}\label{sec::ex}
In this section, we provide empirical results that demonstrate the following: 1) our value-iteration estimate of $J_0$ is close to a Monte Carlo estimate of $J_0^*$,
2) our value-iteration estimate of $\widehat{\mathcal{U}}_y^r$ is an under-approximation of a Monte Carlo estimate of $\mathcal{S}_y^r$, and
3) estimating $J_0$ (and $\widehat{\mathcal{U}}_y^r$) via the value-iteration algorithm is tractable on a
realistic example inspired from stormwater catchment design.
Item 1 provides empirical support for the Conjecture. Items 2 and 3 provide empirical support for 
reducing the computation of risk-sensitive safe sets to a CVaR-MDP.
In our experiments, we used MATLAB R2016b (The MathWorks, Inc., Natick, MA) and MOSEK (Copenhagen, Denmark) with CVX~\cite{grant2008cvx} on a
standard laptop (64-bit OS, 16GB RAM, Intel\textsuperscript{\textregistered} Core\textsuperscript{TM} i7-4700MQ CPU @ 2.40GHz).
Our code is available at~\url{https://github.com/chapmanmp/ACC_2019_Github}.

This section demonstrates the utility of computing approximate risk-sensitive safe sets in a practical setting:
to evaluate the design of a retention pond in a stormwater catchment system. We consider a retention pond from our prior work~\cite{sustech} as a stochastic discrete-time dynamic system,
\begin{equation}\begin{aligned}
& x_{k+1} = x_k + \frac{\triangle t}{A} (w_k - q_p(x_k, u_k)), \text{ }k = 0, \dots, N-1, \\
& q_p(x_k,u_k) := \begin{cases} C_d \pi r^2 u_k \sqrt{ 2\eta(x-E) } & \text{ if } x_k \geq E \\
						0 & \text{ if } x_k < E, \end{cases}
\end{aligned}\label{watersys}\end{equation}
where $x_k \geq 0$ is the water level of the pond in feet at time $k$, $u_k \in U := \{0, 1\}$ is the valve setting at time $k$,
and $w_k \in D := \{d_1, \dots, d_{10}\}$ is the random surface runoff in feet-cubed-per-second at time $k$. ($\eta = 32.2\frac{\text{ft}}{\text{s}^2}$ is the acceleration due to gravity, $\pi \approx 3.14$, $r = \frac{1}{3}$ft is the outlet radius, $A = 28,292$ft\textsuperscript{2} is the pond surface area, $C_d = 0.61$ is the discharge coefficient,
and $E = 1$ft is the elevation of the outlet.)
We estimated a finite probability distribution for $w_k$ using the surface runoff samples that we previously generated from a time-varying \textit{design storm}
(a synthetic storm based on historical rainfall)~\cite{sustech}. 
We averaged each sample over time and solved for a distribution that satisfied the empirical statistics of the time-averaged samples (Table~\ref{dist}). 
We set $\triangle t := 300$ seconds, and $N := 48$ to yield a 4-hour time horizon.
We chose the constraint set $\mathcal{K} := [0, 5)$, and $g(x) := x - 5$.
\begin{table}[b]
\vspace{-2em}
\begin{center}
\caption{}
\begin{tabular}{| p{3.5cm} | p{3.5cm} |}
\hline
\bf{Sample moment} & \bf{Value}  \\ \hline
Mean & 12.16 ft\textsuperscript{3}/s \\ 
Variance & 3.22 ft\textsuperscript{6}/s\textsuperscript{2} \\ 
Skewness & 1.68 ft\textsuperscript{9}/s\textsuperscript{3} \\ 
\hline 
\textbf{Disturbance sample}, $d_j$ ft\textsuperscript{3}/s & \textbf{Probability}, $\mathbb{P}[w_k = d_j]$ \\ \hline
$8.57$ 		& $0.0236$ \\
$9.47$ 		& $10^{-4}$ \\
$10.37$ 		& $10^{-4}$ \\
$11.26$  & $0.5249$ \\ 
$12.16$ & $0.3272$ \\ 
$13.06$  & $10^{-4}$ \\ 
$13.95$  & $10^{-4}$ \\ 
$14.85$  & $10^{-4}$ \\ 
$15.75$  & $10^{-4}$ \\ 
$16.65$  & $0.1237$ \\ \hline
\end{tabular}
\begin{flushleft} \end{flushleft}
\label{dist}
\end{center}
\vspace{-2em}
\end{table}
We computed over a grid of states and confidence levels $G := G_s \times G_c$, where  
$G_s := \{0, 0.1, \dots, 6.4, 6.5\}$,
and $G_c := \{0.999, 0.95, 0.80, 0.65, 0.5, 0.35, 0.20, 0.05, 0.001\}$.
Since the initial state $x_0$ is non-negative and the smallest realization of $w_k$ is about 8.5$\frac{\text{ft}^3}{\text{s}}$, 
$x_{k+1}\geq x_k$ for all $k$. 
If $x_{k+1} > 6.5$ft, we set $x_{k+1} := 6.5$ft to stay within the grid.

We were able to empirically assess the accuracy of our proposed approach because an optimal control policy is known \textit{a priori} for the one-pond system.
Since $x_{k+1}\geq x_k$ for all $k$, and the only way to exit the constraint set is if $x_k \geq 5$ft,
an optimal policy is to keep the valve open over all time, regardless of the current state, the current confidence level, or the state history up to the current time.

Our value-iteration estimate of the function $J_0$ is shown in Fig.~\ref{J0dp}, and a Monte Carlo estimate of the function $J_0^*$ is shown in Fig.~\ref{J0mc}.
The estimates of $J_0$ and $J_0^*$ are similar throughout the grid except near the smaller confidence levels.
The average (largest) difference normalized by the estimate of $J_0^*$  is approximately $1.4$ ($18.7$).
The average (largest) difference normalized by the estimate of $J_0$  is approximately $0.23$ ($0.95$).
These results provide empirical support for the Conjecture.
 
Our value-iteration estimate of the set $\widehat{\mathcal{U}}_{\alpha}^r$ and a Monte Carlo estimate of the set $\mathcal{S}_y^r$ are shown in Fig.~\ref{compare} at different levels of confidence $y$ and risk $r$. 
The empirical results indicate that $\widehat{\mathcal{U}}_{\alpha}^r$ is an under-approximation of $\mathcal{S}_y^r$. We estimated the sets $\{\mathcal{S}_y^r\}$ using a Monte Carlo estimate of the function $W_0^*$, which is shown in Fig.~\ref{W0mc}.

The computation time for our value-iteration estimate of $J_0$ was roughly 3h 6min.
We deem this performance to be acceptable because  
1) computations to evaluate design choices are performed off-line, 
2) the problem entailed a realistically sized state space ($|G_s|\cdot|G_c| = 594$ grid points) and time horizon ($N = 48$ time points),
and 3) our implementation is not yet optimized. Further, there is recent work in scalable approximations of reachable sets (e.g.,~\cite{EECS-2018-41}) that we will investigate for possible extensions to the risk-sensitive case.
\begin{figure}[thpb]
      \centering
      \includegraphics[scale=0.5]{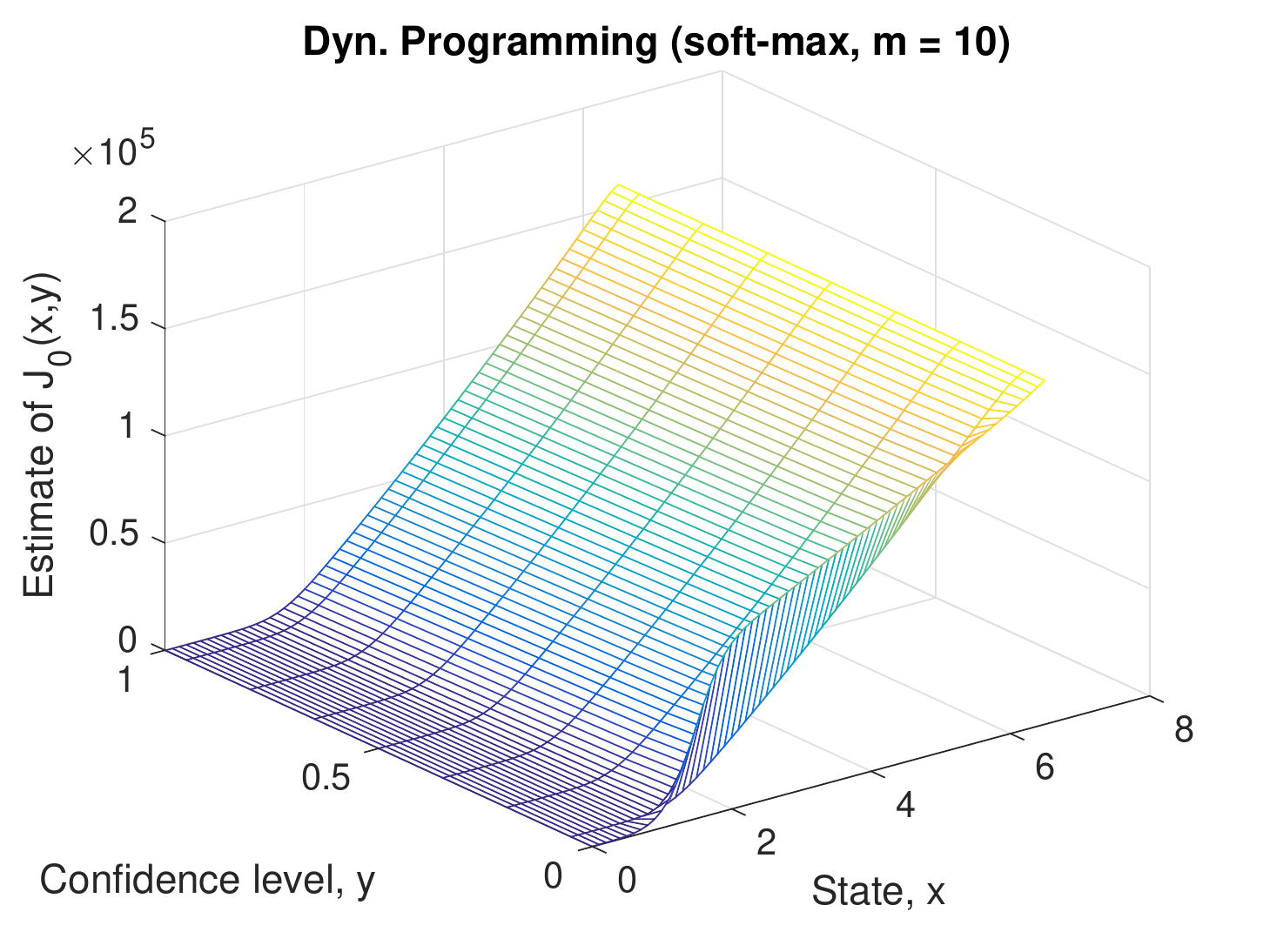}
      \vspace{-1em}
      \caption{Our value-iteration estimate of $J_0(x,\alpha)$ versus $(x, \alpha) \in G$ for the pond system, see~\eqref{bell}.
	  $c(x) := \beta e^{m\cdot g(x)}$, $\beta := 10^{-3}$, $m := 10$, and $g(x) := x-5$. The computation time was roughly 3h 6min.}
      \label{J0dp}
\end{figure}
\begin{figure}[thpb]
      \centering
      \includegraphics[scale=0.5]{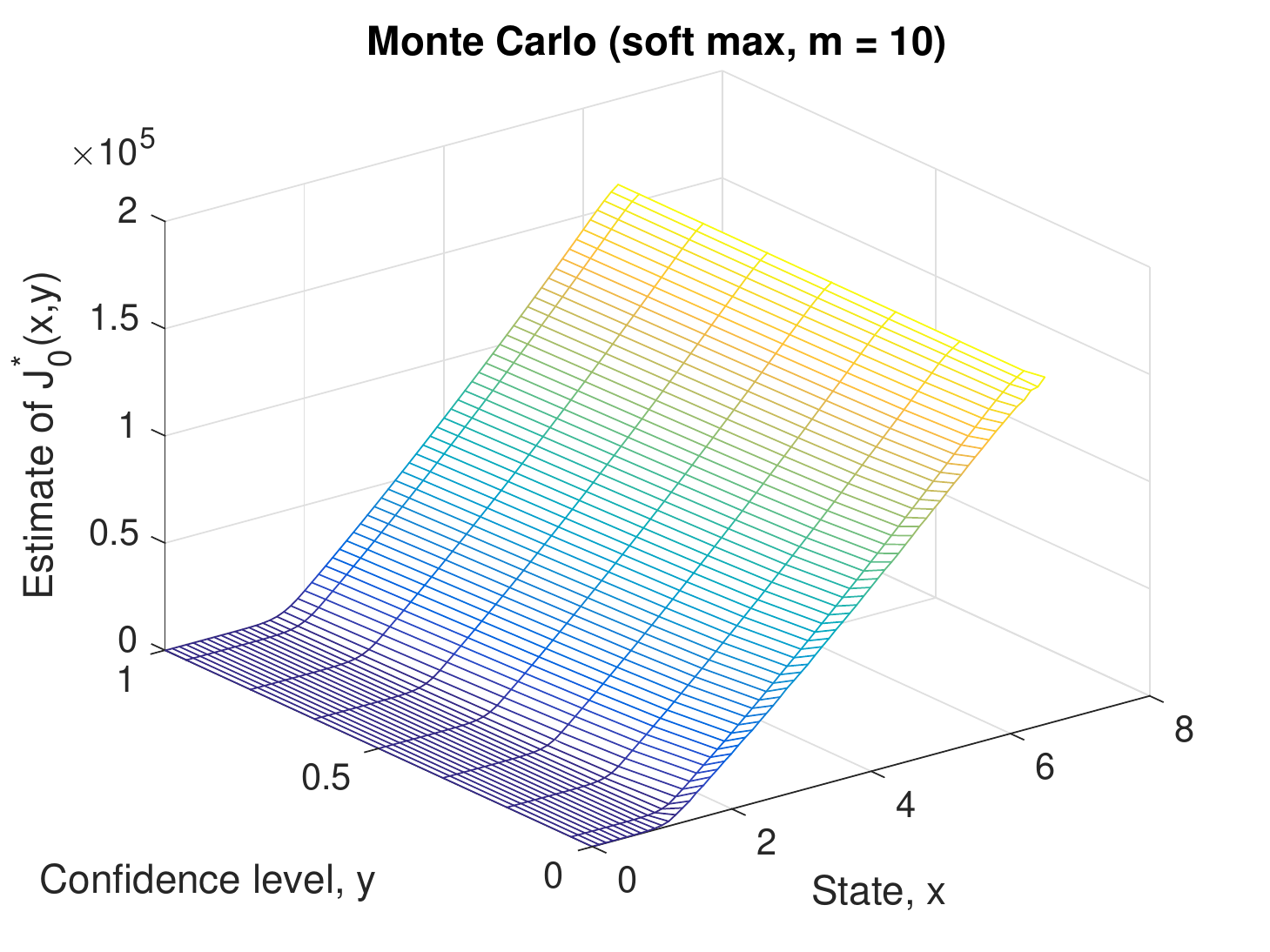}
      \vspace{-1em}
      \caption{A Monte Carlo estimate of $J_0^*(x, \alpha)$ versus $(x, \alpha) \in G$ for the pond system.
	  $c(x) := \beta e^{m \cdot g(x)}$, $\beta := 10^{-3}$, $m := 10$, and $g(x) := x-5$.
	  100,000 samples were generated per grid point. See also Fig.~\ref{J0dp}.}
      \label{J0mc}
\end{figure}
\begin{figure}[thpb]
\vspace{-.6cm}
      \centering
      \includegraphics[scale=0.45]{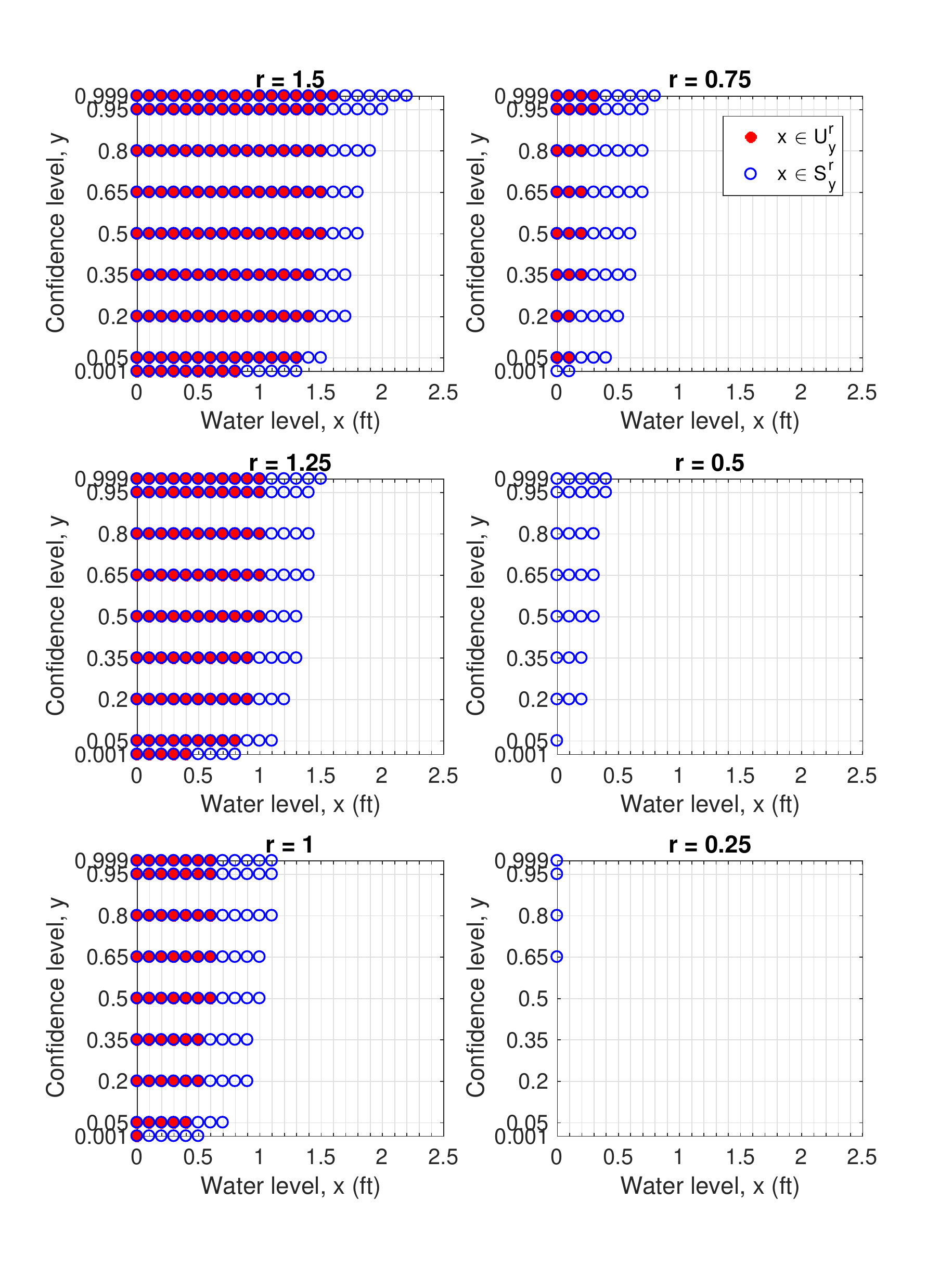}
      \vspace{-2em}
      \caption{ Approximations of $\{\widehat{\mathcal{U}}_y^r\}$ and $\{\mathcal{S}_y^r\}$ are shown for the pond system at various levels of confidence $y$ and risk $r$. In the legend,  $\widehat{\mathcal{U}}_y^r$ is denoted by $\text{U}_y^r$, and $\mathcal{S}_y^r$ is denoted by $\text{S}_y^r$. Approximations of $\{\widehat{\mathcal{U}}_y^r\}$ were obtained from our value-iteration estimate of $J_0$ (see Fig.~\ref{J0dp}). Approximations of $\{\mathcal{S}_y^r\}$ were obtained from a Monte Carlo estimate of $W_0^*$ (see Fig.~\ref{W0mc}).}\label{compare}
\end{figure}
\begin{figure}[thpb]
      \centering
      \includegraphics[scale=0.5]{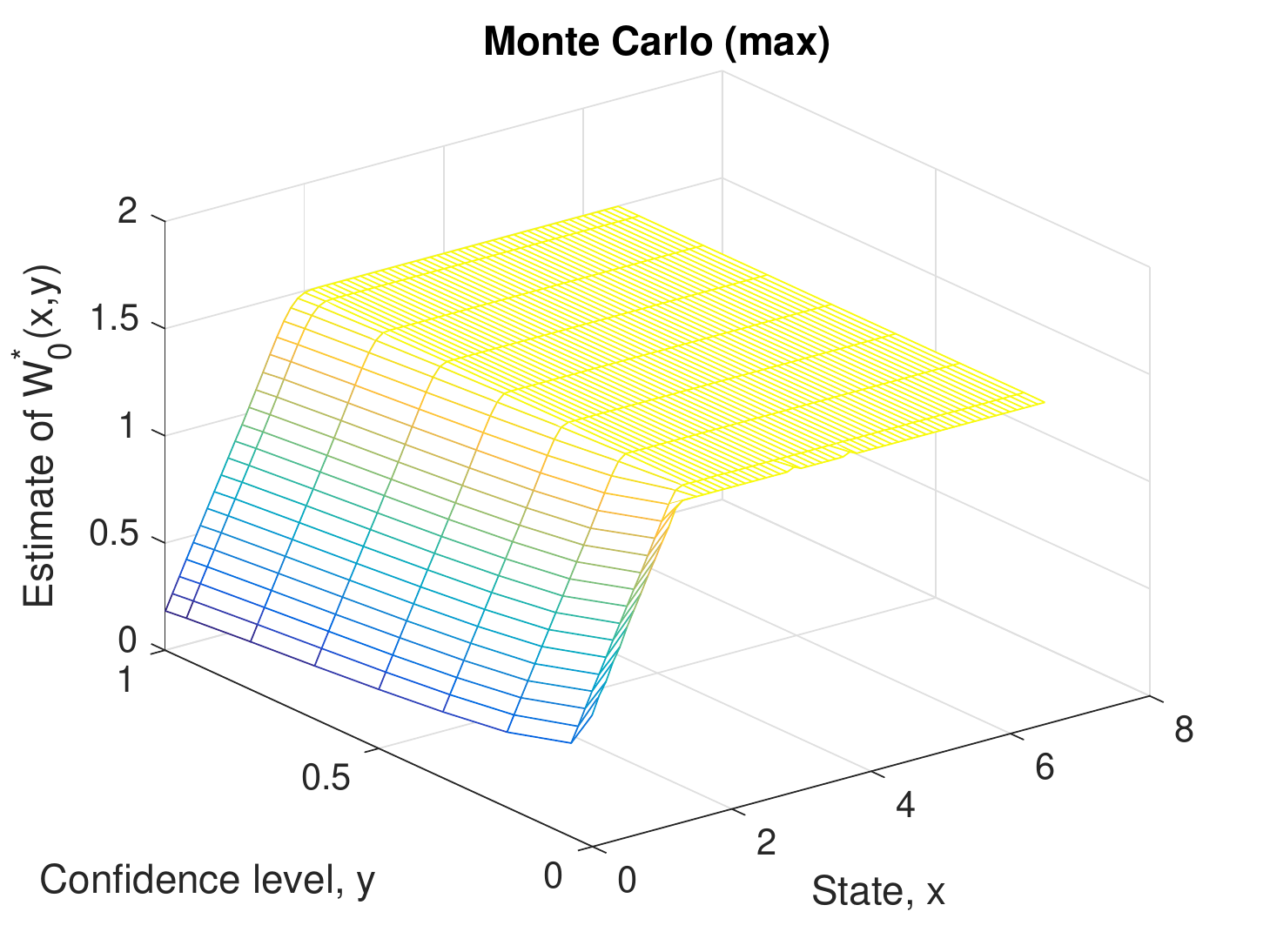}
      \vspace{-1em}
      \caption{A Monte Carlo estimate of $W_0^*(x,\alpha)$, as defined in~\eqref{myS}, versus $(x, \alpha) \in G$ for the pond system.
	  100,000 samples were generated per grid point, and $g(x) := x - 5$. 
	  The maximum is approximately 1.5ft because the system state was prevented from exceeding 6.5ft.}
      \label{W0mc}
\end{figure}

\textit{Value-iteration implementation.} To implement the value-iteration algorithm, we used the interpolation method over the confidence levels proposed by Chow et al.~\cite{chow2015risk} 
to approximate the expectation in~\eqref{bell} as a piecewise linear concave function, which we maximized by solving a linear program.
Further, at each $\alpha \in G_c$, we used multi-linear interpolation to approximate the value of $J_{k+1}(x_{k+1}, \alpha)$. We set $J_{k+1}(x_{k+1}, \alpha) := \frac{(x_{k+1} - x_i) \cdot J_{k+1}(x^{i+1}, \alpha) + (x^{i+1} - x_{k+1})  \cdot J_{k+1}(x^i, \alpha)}{x_{i+1}-x_i}$,
where $x^i \in G_s$ and $x^{i+1} \in G_s$ are the two nearest grid points to $x_{k+1}$ that satisfy $x^i \leq x_{k+1} \leq x^{i+1}$.

\textit{Monte Carlo implementations.}
For each $(x,\alpha) \in G$, we sampled 100,000 trajectories starting from $x_0 = x$, subject to keeping the valve open over time,
as this is an optimal policy.
For each trajectory sample $i$, we computed the cost sample $z_i := \max_k\{g(x_k^i)\}$, and estimated the Conditional Value-at-Risk
of the 100,000 cost samples at the confidence level $\alpha$. 
We used the CVaR estimator $\widehat{\text{CVaR}}_\alpha[Z] := \frac{1}{\alpha M}\sum_{i=1}^M z_i \textbf{1}_{\{z_i\geq \hat{Q}_\alpha\}}$,
where $\hat{Q}_\alpha$ is the $(1-\alpha)$-quantile of the empirical distribution of the samples $\{z_i\}_{i=1}^M$,
and $M := 100,000$ is the number of samples~\cite[Sec. 6.5.1]{shapiro2009lectures}.
Since this estimator is designed for continuous distributions, 
we added zero-mean Gaussian noise with a small standard deviation, $\sigma := 10^{-12}$, to each cost sample prior to computing the CVaR.
Fig.~\ref{W0mc} provides a Monte Carlo estimate of $W_0^*$.
To obtain a Monte Carlo estimate of $J_0^*$, we used the same procedure but with the cost sample 
$z_i := \xi_i + \sum_{k=0}^N \beta e^{m\cdot g(x_k^i)}$, $\xi_i \sim \mathcal{N}(0, \sigma := 10^{-7})$,
$m := 10$, and $\beta := 10^{-3}$; see Fig.~\ref{J0mc}.
%
%
\section{Conclusion}\label{conc}
In this paper, we propose the novel idea of a risk-sensitive safe set
to encode safety of a stochastic dynamic system in terms of an allowable level of risk of constraint violations $r$ 
in the $\alpha$-fraction of the most damaging outcomes.
We show how the computation of a risk-sensitive safe set can be
reduced to the solution to a Markov Decision Process, where cost is assessed according to the Conditional Value-at-Risk measure. 
Further, we devise a tractable algorithm to approximate a risk-sensitive safe set, and provide theoretical and empirical arguments about its correctness.

Risk-sensitive safe sets have the potential to inform the design of safety-critical infrastructure systems,
by revealing trade-offs between the risk of damaging outcomes and design choices at different levels of confidence.
We illustrate our risk-sensitive reachability approach on a stormwater retention pond that must be designed to operate safely in the presence of uncertain rainfall. 
Our results reveal that the current design of the pond is likely undersized: even if the pond starts empty, 
there is a risk of at least 0.25ft of overflow at most levels of confidence 
under the random surface runoff of the design storm (see Fig.~\ref{compare}, $r=0.25$ plot at $x=0$).

On the methodological side, future steps are: 1) to formally prove the correctness of the value-iteration algorithm, 
2) to devise approximate value-iteration algorithms to improve scalability, and 3) to consider a broader class of risk measures. 
On the applications side, future steps are: 1) to adjust the parameters of the dynamics model (e.g., outlet radius) to reduce the risk of extreme overflows, 
2) to apply our method to a more realistic stormwater system that consists of two ponds in series on a larger grid,
and 3) to develop an optimized toolbox for the computation of risk-sensitive safe sets.  
We are hopeful that with further development, the concept of risk-sensitive reachability will become a valuable tool for the design of safety-critical systems.

\section*{ACKNOWLEDGMENTS}
We thank Sumeet Singh, Dr. Mo Chen, Dr. Murat Arcak, Dr. Alessandro Abate, and Dr. David Freyberg for discussions. 
MC is supported by an NSF Graduate Research Fellowship and was supported by a Berkeley Fellowship for Graduate Studies. 
This work is supported by NSF CPS 1740079 and NSF PIRE UNIV59732.

\section*{APPENDIX}\label{appendix}

\begin{proof}[Proof of Theorem~\ref{lemma2}]
The proof relies on two facts. The first fact is,
\begin{subequations}\label{related}
\begin{equation}\begin{aligned}
\max\{y_1, \dots, y_p\} & \leq \frac{1}{m}\log( e^{my_1}+ \dots +e^{my_p} ) \\
						& \leq \max\{y_1, \dots, y_p\} + \frac{\log p}{m},
\end{aligned}\end{equation}
for any $y \in \mathbb{R}^p$, $m > 0$. (Use the log-sum-exp relation stated in~\cite[Sec. 3.1.5]{boyd2004convex}.) So, as $m \rightarrow \infty$,
\begin{equation}
\frac{1}{m}\log( e^{my_1}+ \dots +e^{my_p} ) \rightarrow \max\{y_1, \dots, y_p\}.
\end{equation}
\end{subequations}
The second fact is that CVaR is a \textit{coherent risk measure},
so it satisfies certain properties. 
CVaR is \textit{positively homogeneous}, $\text{CVaR}_\alpha[\lambda Z] = \lambda\text{CVaR}_\alpha[Z]$ 
for any $\lambda \geq 0$,
and \textit{monotonic}, $\text{CVaR}_\alpha[Y] \leq \text{CVaR}_\alpha[Z]$ for any random variables $Y \leq Z$~\cite[Sec. 2.2]{kisiala2015conditional}.
Also, CVaR can be expressed as the supremum expectation over a particular set of probability density functions~\cite[Eqs. 6.40 and 6.70]{shapiro2009lectures}.
Using this property and $\mathbb{E}[\log(Z)] \leq \log \left(\mathbb{E}[Z]\right)$, one can show,
\begin{equation} \text{CVaR}_\alpha[\log(Z)] \leq \log \left(\text{CVaR}_\alpha[Z]\right), \label{logeq}\end{equation}
for any random variable $Z$ with finite expectation. 

By monotonicity, positive homogeneity,~\eqref{related}, and~\eqref{logeq},
\begin{equation}\begin{aligned}
\text{CVaR}_\alpha\big[ Z_x^\pi \big] & \leq \textstyle\frac{1}{m} \text{CVaR}_\alpha\big[ \log\left( \bar{Y}_x^\pi \right) \big] \\
& \leq \textstyle\frac{1}{m} \log \left(\text{CVaR}_\alpha\big[ \bar{Y}_x^\pi \big] \right),
\end{aligned}\label{12}\end{equation}
where $\bar{Y}_x^\pi := Y_x^\pi/\beta$. Now, if $x \in \mathcal{U}_\alpha^r$, then
\begin{equation*}
e^{m\cdot r} \geq {\underset{\pi \in \bar{\Pi}_0}\min}\text{ CVaR}_\alpha \big[ Y_x^\pi /\beta \big] \geq {\underset{\pi \in \Pi}\min}\text{ CVaR}_\alpha \big[ Y_x^\pi /\beta \big], 
\end{equation*}
since $\bar{\Pi}_0$ is included in $\Pi$. By Lemma~\ref{lemma::infeqmin}, there exists $\pi \in \Pi$ such that
\begin{equation*}
r \geq \textstyle\frac{1}{m}\log \left( \text{CVaR}_\alpha \big[  Y_x^\pi /\beta \big] \right) \geq \text{CVaR}_\alpha\big[ Z_x^\pi \big], \\
\end{equation*}
where the second inequality holds by~\eqref{12}. So, $x \in \mathcal{S}_\alpha^r$.
\end{proof}
\vspace{1em}
\textit{Theoretical Justification of Conjecture (C):}
Let $\epsilon > 0$. For all $k = 0, \dots, N-1$ and $z_k := (x_k, y_k) \in \mathcal{X} \times \mathcal{Y}$, let $\mu_k^\epsilon : \mathcal{X} \times \mathcal{Y} \to U$ satisfy,
\begin{equation} 
c(x_k) + {\underset{R \in \mathcal{R}(y_k, \mathbb{P})}\max} \mathbb{E}[ RJ_{k+1}(x_{k+1}, y_k R) | z_k, \mu_k^\epsilon] \leq J_k(z_k) + \epsilon.
\label{first}\end{equation} 
Let $J_k^\epsilon$ be a sub-optimal cost-to-go starting at time $k$,
\begin{equation} 
J_k^\epsilon(z_k) := \text{CVaR}_{y_k}\big[\textstyle \sum_{i=k}^N c(x_i) \big| z_k, \pi_k^\epsilon \big],
\label{Jkeps}\end{equation}
where $\pi_k^\epsilon := (\mu_k^\epsilon,\dots,\mu_{N-1}^\epsilon) = (\mu_k^\epsilon, \pi_{k+1}^\epsilon)$. Recall $J_k$, as defined in~\eqref{bell}. Define $J_k^*(z_k) := \min_{\pi \in \bar{\Pi}_k} \text{CVaR}_{y_k}\big[\textstyle \sum_{i=k}^N c(x_i) \big| z_k, \pi \big]$.
To prove the Conjecture, we would like to show  by induction that for all $z_k := (x_k, y_k) \in \mathcal{X} \times \mathcal{Y}$ and $k = N-1, \dots, 0$,
\begin{subequations}\label{abc}
\begin{equation}
J_k(z_k) \leq J_k^\epsilon(z_k) \leq J_k(z_k) + (N-k)\epsilon, 
\label{a}\end{equation}
\begin{equation}
J_k^*(z_k) \leq J_k^\epsilon(z_k) \leq J_k^*(z_k) + (N-k)\epsilon, 
\label{b}\end{equation}
\begin{equation}
J_k(z_k) = J_k^*(z_k), 
\label{c}\end{equation}
\end{subequations}
which is the proof technique in~\cite[Sec. 1.5]{bertsekas2005dynamic}. 
One can show~\eqref{abc} for the base case, $k := N-1$, since $J_N$ is known.
Assuming~\eqref{abc} holds for index $k+1$ (induction hypothesis), we want to show that~\eqref{abc} holds for index $k$ (induction step). The key 
idea is to use the recursion,
\begin{equation}
J_k^\epsilon(z_k) = c(x_k) + {\underset{R \in \mathcal{R}(y_k, \mathbb{P})}\max} \mathbb{E}[ R\cdot J^\epsilon_{k+1}(x_{k+1}, y_k R) | z_k, \mu_k^\epsilon],
\label{Jkepsrec}\end{equation}
which we justify next. Let $Z := \sum_{i=k+1}^N c(x_i)$. 
\begin{equation*}\begin{aligned}
J_k^\epsilon(z_k)& - c(x_k) = \text{CVaR}_{y_k}\big[ Z \big| z_k, \pi_k^\epsilon \big]\\
& = {\underset{R \in \mathcal{R}(y_k, \mathbb{P})}\max}\mathbb{E}\big[ R\cdot \text{CVaR}_{y_k R}[ Z | x_{k+1}, \pi^\epsilon_{k+1} ] \big| z_k, \mu_k^\epsilon \big] \\
& {\overset{(a)}{=}} {\underset{R \in \mathcal{R}(y_k, \mathbb{P})}\max}\mathbb{E}\big[ R\cdot J_{k+1}^\epsilon(x_{k+1},y_k R) \big| z_k, \mu_k^\epsilon \big],
\end{aligned}\end{equation*}
where we use~\eqref{decomp},~\eqref{Jkeps}, and \textit{translation invariance} ($\text{CVaR}_\alpha[a+Z] = a + \text{CVaR}_\alpha[Z]$ for $a \in \mathbb{R}$, see~\cite[Sec. 2.2]{kisiala2015conditional}). 
The last equality (a) is {\em the crux of the Conjecture}, 
as one needs to justify why the worst-case density $R$ is equal to the \textit{a priori} chosen density $\bar{R}$ that defines the dynamics of the confidence level. 
Based on \cite{chow2015risk}, we believe this equality to be correct, but we leave its formal proof for future research. 
Assuming the aforementioned equality is correct, then we show~\eqref{a} for index $k$ using~\eqref{Jkepsrec} and the induction hypothesis. 
Let $\bar{\epsilon}_k := (N-k-1)\epsilon$, and $x' := x_{k+1}$.
\begin{equation*}\begin{aligned}
J_k^\epsilon(z_k) & \leq c(x_k) + {\underset{R \in \mathcal{R}(y_k, \mathbb{P})}\max} \mathbb{E}\big[ R\big(J_{k+1}(x',y_k R) + \bar{\epsilon}_k\big) \big| z_k, \mu_k^\epsilon \big]\\
& = c(x_k) + {\underset{R \in \mathcal{R}(y_k, \mathbb{P})}\max} \mathbb{E}\big[ R J_{k+1}(x_{k+1},y_k R) \big| z_k, \mu_k^\epsilon \big] + \bar{\epsilon}_k\\ 
& \leq J_k(z_k) + (N-k)\epsilon,
\end{aligned}\end{equation*}
since $\mathbb{E}[R] = 1$, and by~\eqref{first}. By~\eqref{bell}, sub-optimality of $\mu_k^\epsilon(z_k) \in U$, $J_{k+1} \leq J^\epsilon_{k+1}$, and~\eqref{Jkepsrec},
\begin{equation*}\begin{aligned}
J_k(z_k) & \leq c(x_k) + {\underset{R \in \mathcal{R}(y_k, \mathbb{P})}\max} \mathbb{E}\big[ R J^\epsilon_{k+1}(x_{k+1},y_k R) \big| z_k, \mu_k^\epsilon \big]\\
& = J_k^\epsilon(z_k), \end{aligned}\end{equation*}
which would complete the induction step for~\eqref{a}, if~\eqref{Jkepsrec} holds.
Next, we show~\eqref{b} for index $k$. By definition, $J_k^*$ is the optimal risk-sensitive cost-to-go from stage $k$, thus, $J_k^* \leq J_k^\epsilon$.
Let $\hat{\epsilon}_k := (N-k)\epsilon$, $x':=x_{k+1}$, $y' := y_kR$, and $Z := \sum_{i=k+1}^N c(x_i)$. For any $\pi_k := (\mu_k, \pi') \in \bar{\Pi}_k$,
\begin{equation*}\begin{aligned}
J_k^\epsilon(z_k) & \leq J_k(z_k) + \hat{\epsilon}_k \\
& \leq c(x_k) + {\underset{R \in \mathcal{R}(y_k, \mathbb{P})}\max} \mathbb{E}\big[ RJ_{k+1}^*(x_{k+1},y_kR) \big| z_k, \mu_k \big] + \hat{\epsilon}_k\\
& \leq c(x_k) + {\underset{R \in \mathcal{R}(y_k, \mathbb{P})}\max} \mathbb{E}\big[ R\text{CVaR}_{y'} [Z|x', \pi' ] \big| z_k, \mu_k \big] + \hat{\epsilon}_k\\
& = c(x_k) + \text{CVaR}_{y_k}[Z|z_k, \pi_k] + \hat{\epsilon}_k\\
& = \text{CVaR}_{y_k}\big[\textstyle \sum_{i=k}^N c(x_k)|z_k, \pi_k\big] + (N-k)\epsilon.
\end{aligned}\end{equation*}
The above statement implies
\begin{equation*}\begin{aligned}
J_k^\epsilon(z_k) & \leq {\underset{\pi \in \bar{\Pi}_k} \min} \text{ CVaR}_{y_k}\big[\textstyle \sum_{i=k}^N c(x_k)|z_k, \pi_k\big] + (N-k)\epsilon \\
& = J_k^*(z_k) + (N-k)\epsilon, \\
\end{aligned}\end{equation*}
which completes the induction step for~\eqref{b}. 

Thus, if~\eqref{Jkepsrec} holds, then~\eqref{a} and~\eqref{b} hold for index $k$ for any $\epsilon>0$. So,~\eqref{c} would hold for index $k$. 
Assuming the conjectured equality (a) is correct, this would complete the proof of the Conjecture.

\section*{AUTHOR CONTRIBUTIONS}
MC: formulation of risk-sensitive safe set based on CVaR so that it generalized stochastic reachability, formulation and proof of reduction to CVaR-MDP theorem (Theorem 1), identified minimax error in Chow 2015 paper, theoretical justification for value iteration using techniques from~\cite{bertsekas2005dynamic} (see extended manuscript), implementation of numerical example, connected experts in controls, risk measures, stormwater catchment design together to create the paper; wrote initial draft and revised to incorporate co-authors’ feedback.

JL: careful in-depth study of the value iteration algorithm to identify why it is an approximation, important technical revisions (e.g., the intuition behind the confidence level, the reason for the approximate algorithm, clearly specifying the history-dependent policy space), proof of Lemma 1, statement of Lemma 3, expertise of Chow 2015 paper; reviewed drafts of the paper and proofs.

AT: explained Chow 2015 paper to MC during numerous discussions; reviewed drafts of the paper and proofs.

DL: integration of the parameter $m$ into the log-sum-exp approximation in proof of Theorem 1.

KS: expertise in PCSWMM, used PCSWMM to generate realistic simulations of the stormwater catchment in response to the design storm; interpreted the meaning of risk-sensitive safe sets in the context of the stormwater design application, which is a new application for reachability theory; reviewed drafts of the paper and proofs.

VC: generated the probability distribution for the surface runoff using empirical moments.

JF: provided initial golden nugget of ``can we formulate a risk-sensitive version of reachability-based control?"; worked with MC on numerous occasions to study risk measures and measure theory; reviewed drafts of the paper and proofs.

SJ: MC's research mentor and PI at SRI international over the summer when this paper was developed, through numerous discussions helped M.C. convey mathematical ideas more clearly, improved the proof of Lemma 2; reviewed drafts of the paper and proofs.

MP: JL's adviser, made numerous important revisions to the framing of the paper to take into account the approximate value-iteration method; reviewed drafts of the paper and proofs.

CT: MC's adviser, numerous discussions with M.C. during the development of the paper to improve accessibility of the ideas and techniques; reviewed drafts of the paper and proofs.

\addtolength{\textheight}{-2cm}   

\bibliography{references}

\end{document}